\documentclass[a4paper,fleqn]{cas-sc}

\usepackage[authoryear,longnamesfirst]{natbib}
\usepackage{amsthm}

\theoremstyle{plain}
\newtheorem{theorem}{Theorem}
\newtheorem{assumption}[theorem]{Assumption}

\def\tsc#1{\csdef{#1}{\textsc{\lowercase{#1}}\xspace}}
\tsc{WGM}
\tsc{QE}

\begin{document}
\let\WriteBookmarks\relax
\def\floatpagepagefraction{1}
\def\textpagefraction{.001}

\shorttitle{}    

\shortauthors{}  

\title[mode = title]{Continuity of VaR and Continuous Differentiability of CVaR under Decision-Dependent Losses
}

\author[1,2]{Amal Sakr}
\cormark[1]
\ead{amal.sakr@telecom-sudparis.eu}

\author[1]{Andrea Araldo}

\author[2]{Tamer Ba\c{s}ar}

\author[1]{Tijani Chahed}

\affiliation[1]{
    organization={Institut Polytechnique de Paris},
    city={Palaiseau},
    postcode={91120},
    country={France}
}

\affiliation[2]{
    organization={University of Illinois Urbana-Champaign},
    city={Urbana},
    postcode={61801},
    state={IL},
    country={USA}
}

\cortext[1]{Corresponding author}

\begin{abstract}
Value-at-risk (VaR) and conditional value-at-risk (CVaR) are widely used in
risk-aware optimization and equilibrium models. When the loss depends on a
decision variable, the induced distribution, the VaR threshold, and the CVaR
tail set all change with the decision. This makes the regularity of the VaR
and CVaR maps nontrivial. We give simple sufficient conditions under which the
VaR map is continuous and the corresponding CVaR map is continuously
differentiable. The assumptions are local around the VaR level and rely on
dominated pathwise differentiability of the scenario-wise loss. We also derive
the CVaR gradient formula, thereby justifying first-order analysis for
decision-dependent tail-risk models.
\end{abstract}



\begin{keywords}
Value-at-risk \sep Conditional value-at-risk \sep Decision-dependent losses \sep
\end{keywords}

\maketitle


\section{Introduction}

Conditional value-at-risk (CVaR), also known as expected shortfall or
average value-at-risk, is a standard risk measure for decision problems in
which tail losses are more important than average performance alone
\citep{rockafellar2000optimization,acerbi2002expected,hong2009simulating}.
It has been widely used in portfolio optimization and financial risk
management, and it has also been incorporated into optimization models beyond
finance \citep{rockafellar2000optimization,filippi2020conditional}. CVaR has
also been used in risk-sensitive learning, where policy-gradient methods have
been developed for coherent risk measures including CVaR
\citep{tamar2015policy}. More recently, CVaR-based variational inequalities
have been studied, with stochastic-approximation and sample-average schemes
proposed for their solution
\citep{verbree2020stochastic,cherukuri2024sample}.

These applications often rely on sensitivity information for CVaR. \citet{hong2009simulating} study Monte Carlo estimation of CVaR sensitivities and show that such
sensitivities can be represented through conditional expectations. Gradient-based CVaR optimization also relies on
estimating CVaR sensitivities with respect to the design variable
\citep{ganesh2023gradient}. In risk-sensitive learning, policy-gradient methods
are developed to optimize coherent risk criteria, including CVaR-type criteria,
through first-order information \citep{tamar2015policy}. Similarly, in
CVaR-based variational inequalities, the map defining the equilibrium problem
is built from CVaR evaluations of uncertain functions, and
stochastic-approximation or sample-average schemes are used to solve the
resulting problems
\citep{verbree2020stochastic,verbree2022stochastic,cherukuri2024sample}.
These works show that CVaR regularity is important when CVaR appears inside an
optimization, learning, or equilibrium formulation.

However, such regularity cannot be taken for granted. In \citep{rockafellar2000optimization}, CVaR involves a positive-part function,
and CVaR minimization has been treated as a nonsmooth stochastic optimization
problem. Existing works therefore use smoothing, nonsmooth optimization, or
gradient-free methods when direct smooth first-order structure is not available
\citep{tarnopolskaya2010cvar,chaudhuri2020risk}. The issue becomes more
delicate when the loss itself depends on the decision variable. In that case,
changing the decision changes not only the scenario-wise loss, but also the
induced distribution, the VaR threshold, and the tail set used in the CVaR
evaluation. Thus, even when the scenario-wise loss is smooth, the resulting
CVaR map is not simply an expectation of a smooth random function. Its
differentiability requires separate justification.

The VaR threshold is a central part of this difficulty. VaR is known to be
analytically delicate: it may fail to be continuous and differentiable
\citep{balbas2017differential}, and it can be ill-behaved as a function of the
decision variables \citep{rockafellar2000optimization}. Moreover, CVaR
sensitivity formulas commonly require regularity of the VaR threshold; for
instance, \citet{hong2011monte} assume differentiability of the VaR level when
deriving CVaR sensitivity. Therefore, establishing continuity of the VaR map is
a natural preliminary step before proving differentiability of the
decision-dependent CVaR map.

This note provides such a justification under simple local assumptions. First,
we prove that the VaR map is continuous under local distributional regularity
around the VaR level. This controls the movement of the endogenous tail
threshold. Second, under an additional dominated pathwise differentiability
condition on the scenario-wise loss, we prove that the corresponding CVaR map
is continuously differentiable and derive its gradient formula. 

\section{Preliminaries and notation}
\label{sec:prelim}

Let \((\Omega,\mathcal F,\mathbb P)\) be a probability space and let
\(O\subseteq\mathbb R^d\) be an open set. The variable \(x\in O\) denotes a
decision parameter. For each decision \(x\), the model generates a real-valued
loss random variable \(L(x)\in L^1(\Omega)\). For a scenario
\(\omega\in\Omega\), \(L_\omega(x)\) denotes the realized loss under decision
\(x\). Thus \(x\mapsto L_\omega(x)\) describes how the loss in a fixed scenario
changes with the decision, while \(L(x)\) denotes the induced random loss.

For \((z,x)\in\mathbb R\times O\), define the cumulative
distribution function (CDF)  of the
decision-dependent loss by
\[
F(z,x):=\mathbb P(L(x)\le z).
\]
For a fixed confidence level \(\alpha\in(0,1)\), define
\[
\operatorname{VaR}_{\alpha}(L(x))
:=\inf\{z\in\mathbb R:\,F(z,x)\ge\alpha\}.
\]
Hence \(\operatorname{VaR}_{\alpha}(L(x))\) is the \(\alpha\)-quantile of the loss distribution induced by
the decision \(x\).

We use the Rockafellar--Uryasev representation of CVaR
\citep{rockafellar2000optimization,rockafellar2002conditional}. For
\((x,\eta)\in O\times\mathbb R\), set
\[
\phi(x,\eta):=
\eta+\frac{1}{1-\alpha}\mathbb E[(L(x)-\eta)_+],
\qquad (a)_+:=\max\{a,0\}.
\]
Then
\[
\operatorname{CVaR}_{\alpha}(L(x))
=
\inf_{\eta\in\mathbb R}\phi(x,\eta).
\]

\section{Regularity assumptions}
\label{sec:assumptions}

We impose two assumptions. The first controls the distribution of the
decision-dependent loss near the VaR level.

\begin{assumption}
\label{ass:var_continuity}
For every \(x\in O\), the random variable \(L(x)\) is integrable. Moreover: \begin{enumerate} \item the CDF  \( F(z,x)\) is jointly continuous in \((z,x)\in\mathbb R\times O\); \item for every \(x\in O\), the random variable \(L(x)\) admits a density \(f(\cdot;x)\) in a neighborhood of \(\operatorname{VaR}_{\alpha}(L(x)), \) and this density is strictly positive in that neighborhood. \end{enumerate} \end{assumption}

Assumption~\ref{ass:var_continuity} excludes jumps and flat portions of the loss
distribution around the VaR threshold. It is local: positivity of the density
is required only near the relevant quantile, not over the whole support. This
is the condition used to prove that the VaR threshold moves continuously with
the decision variable.

The second assumption controls the pathwise dependence of the loss on the
decision variable.

\begin{assumption}[Dominated pathwise differentiability]
\label{ass:cvar}
The loss satisfies:
\begin{enumerate}
\item \(x\mapsto L_\omega(x)\) is continuously differentiable on \(O\) for
a.e. \(\omega\in\Omega\);
\item there exists \(M\in L^1(\Omega)\) such that
\[
\sup_{x\in O}\|\nabla_x L_\omega(x)\|\le M_{\omega},
\qquad \text{for a.e. }\omega\in\Omega .
\]
\end{enumerate}
\end{assumption}

Assumption~\ref{ass:cvar} is a global dominated differentiability condition on \(O\).
It allows differentiation to pass through the expectation in the
Rockafellar--Uryasev representation.

Together, these assumptions separate the two sources of difficulty:
Assumption~\ref{ass:var_continuity} controls the movement of the
endogenous quantile \(v(x)\), while
Assumption~\ref{ass:cvar} controls the sensitivity of the
scenario-wise loss \(L_\omega(x)\). Under these conditions, the next section
shows that \(v\) is continuous and that
\(\mathrm{CVaR}_{\alpha}(L(x))\) is continuously differentiable.

\section{Main results}
\label{sec:main_results}

We now present the main results of the paper. 

\begin{theorem}
\label{thm:var_continuity}
Suppose that Assumption~\ref{ass:var_continuity} holds. Then the
mapping
\[
x\mapsto \mathrm{VaR}_{\alpha}(L(x))
\]
is continuous on \(O\).
\end{theorem}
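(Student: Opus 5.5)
Fix $x_0\in O$ and write $v(x):=\operatorname{VaR}_{\alpha}(L(x))$ and $v_0:=v(x_0)$. The plan is a direct $\varepsilon$-argument: squeeze $v(x)$ between $v_0-\varepsilon$ and $v_0+\varepsilon$ for all $x$ near $x_0$, using only joint continuity of $F$ and the local positivity of the density at $x_0$. By part~2 of Assumption~\ref{ass:var_continuity}, there is $\delta_0>0$ such that $L(x_0)$ has a density $f(\cdot;x_0)>0$ on $(v_0-\delta_0,v_0+\delta_0)$. Joint continuity gives continuity of $z\mapsto F(z,x_0)$, and the infimum in the definition of VaR is attained by right-continuity, so $F(v_0,x_0)=\alpha$. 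Then, for any $0<\varepsilon<\delta_0$,
\[
F(v_0-\varepsilon,x_0)=\alpha-\int_{v_0-\varepsilon}^{v_0} f(t;x_0)\,dt<\alpha,
\qquad
F(v_0+\varepsilon,x_0)=\alpha+\int_{v_0}^{v_0+\varepsilon} f(t;x_0)\,dt>\alpha .
\]
This strict separation is the key quantitative step. It is exactly where ``no flat portion'' enters: without positivity of the density, $F(v_0+\varepsilon,x_0)$ could equal $\alpha$, and then the quantile could jump upward under perturbation.

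Next, I transfer these strict inequalities to nearby $x$. The maps $x\mapsto F(v_0\pm\varepsilon,x)$ are continuous at $x_0$ by part~1 of Assumption~\ref{ass:var_continuity}, since each is the restriction of $F$ to a fixed $z$. Hence there is a neighborhood $U\subseteq O$ of $x_0$ on which
\[
F(v_0-\varepsilon,x)<\alpha \quad\text{and}\quad F(v_0+\varepsilon,x)>\alpha
\qquad\text{for all } x\in U .
\]

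From the definition of VaR as an infimum, I then conclude as follows for $x\in U$. Since $F(v_0+\varepsilon,x)\ge\alpha$, we get $v(x)\le v_0+\varepsilon$. Since $F(\cdot,x)$ is nondecreasing and $F(v_0-\varepsilon,x)<\alpha$, every $z\le v_0-\varepsilon$ satisfies $F(z,x)<\alpha$, so $v(x)\ge v_0-\varepsilon$. Thus $|v(x)-v_0|\le\varepsilon$ on $U$, which proves continuity at the arbitrary point $x_0$, and hence on $O$.

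The main obstacle is the first step: correctly extracting $F(v_0,x_0)=\alpha$ and the strict inequalities on both sides from the local density hypothesis. I would handle it by writing $F(v_0\pm\varepsilon,x_0)-F(v_0,x_0)$ as the integral of $f(\cdot;x_0)$ over the corresponding interval, which is legitimate because the density exists on that neighborhood. The remaining steps are routine. Notably, the hypotheses are only needed at $x_0$; positivity of the density at other points $x$ is never used.
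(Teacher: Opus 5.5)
Your proof is correct and follows essentially the same route as the paper: show $F(v_0,x_0)=\alpha$, obtain the strict inequalities $F(v_0-\varepsilon,x_0)<\alpha<F(v_0+\varepsilon,x_0)$, transfer them to nearby $x$ using joint continuity of $F$, and then squeeze $v(x)$ using the infimum definition of VaR. Your version is slightly tighter than the paper's. You derive the strict inequalities from the local density for $\varepsilon<\delta_0$, and you use only monotonicity of $F(\cdot,x)$ at nearby $x$. The paper instead asserts that $F(\cdot,y)$ is strictly increasing everywhere, which the local density assumption does not provide.
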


Theorem~\ref{thm:var_continuity} shows that the risk threshold is stable
under perturbations of the decision parameter. This property is essential:
without continuity of \(v(x)\), small changes in \(x\) could move the
value-at-risk threshold discontinuously and therefore change the CVaR tail
region abruptly.
\begin{proof}
    Fix \(x\in \mathcal O\), and set
\begin{equation}\label{eq:bar_v_def}
\bar v
:=
\mathrm{VaR}_{\alpha}\!\left(L(x)\right)
\end{equation}
By definition,
\begin{equation}\label{eq:bar_v_min_def}
\bar v
=
\min\left\{
z\in\mathbb R:
F(z,x)\ge \alpha
\right\}
\end{equation}
Hence
\(
F(\bar v,x)\ge \alpha
\).
It is easy to show that\footnote{Suppose, by contradiction, that
\(F(\bar v,x)>\alpha\). Since the CDF is continuous in
\(z\), there exists \(\varepsilon>0\) such that
\(F(\bar v-\varepsilon,x)> \alpha\). But
\(\bar v-\varepsilon<\bar v\), which contradicts the minimality of \(\bar v\)
in \eqref{eq:bar_v_min_def}. Hence
\(F(\bar v,x)=\alpha\).}
\begin{equation}\label{eq:F_bar_v_eq_alpha}
F(\bar v,x)=\alpha
\end{equation}

Moreover, by Assumption~\ref{ass:var_continuity}(2), the CDF
\(z\mapsto F(z,x)\) is strictly increasing, and therefore
invertible with respect to \(z\). Since \(\bar v\) satisfies
\(F(\bar v,x)=\alpha\), the equation
\[
F(z,x)=\alpha
\]
has the unique solution \(z=\bar v\).

Now let \(\varepsilon>0\). Since, for fixed
\(x\), the CDF
\(z\mapsto F(z,x)\) is strictly increasing with respect to
\(z\), and since \eqref{eq:F_bar_v_eq_alpha} holds, we have
\begin{equation}\label{eq:strict_bounds_at_bar_h}
F(\bar v-\varepsilon,x)
<
\alpha
<
F(\bar v+\varepsilon,x).
\end{equation}

By the joint continuity of \(F\), there exists \(\delta>0\) such that whenever
\(\|y-x\|<\delta\), one has
\begin{equation}\label{eq:strict_bounds_near_bar_h}
F(\bar v-\varepsilon,y)
<
\alpha
<
F(\bar v+\varepsilon,y).
\end{equation}

Let
\[
v(y)
:=
\mathrm{VaR}_{\alpha}\!\left(L(y)\right).
\]
By the same argument as in \eqref{eq:F_bar_v_eq_alpha}, we have
\begin{equation}\label{eq:F_v_h_eq_alpha}
F\!\left(v(y),y\right)=\alpha.
\end{equation}

Since \(z\mapsto F(z,y)\) is strictly increasing, the inequalities in
\eqref{eq:strict_bounds_near_bar_h} imply
\[
\bar v-\varepsilon
<
v(y)
<
\bar v+\varepsilon.
\]


Thus
\[
\left|
v(y)-v(x)
\right|
=
\left|
v(y)-\bar v
\right|
< \varepsilon .
\]
Since \(\varepsilon>0\) was arbitrary, \(v\) is continuous at
\(x\). Since \(x\in \mathcal O\) was arbitrary, the
mapping
\[
x\mapsto
\mathrm{VaR}_{\alpha}\!\left(L(x)\right)
\]
is continuous on \(\mathcal O\).
\end{proof}
\begin{theorem}
\label{thm:cvar_differentiability}
Suppose that Assumptions~\ref{ass:var_continuity}
and~\ref{ass:cvar} hold. Then the mapping
\(
x\mapsto \mathrm{CVaR}_{\alpha}(L(x))
\)
is continuously differentiable on \(O\). Moreover,
\[
\nabla_x\operatorname{CVaR}_{\alpha}(L(x))
=
\frac{1}{1-\alpha}
\mathbb E\!\left[
\nabla_x L_\omega(x)\mathbf 1_{\{L_\omega(x)>v(x)\}}
\right].
\]
\end{theorem}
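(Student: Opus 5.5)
The plan is to combine the Rockafellar--Uryasev representation with an envelope/Danskin-type argument. The minimizer \(\eta=v(x)\) is continuous by Theorem~\ref{thm:var_continuity}. The first step is to identify the minimizer. For fixed \(x\), the map \(\eta\mapsto\phi(x,\eta)\) is convex and has one-sided derivatives
\[
\partial^\pm_\eta\phi(x,\eta)=1-\frac{1}{1-\alpha}\mathbb P(L(x)>\eta)
\quad\text{or}\quad 1-\frac{1}{1-\alpha}\mathbb P(L(x)\ge\eta).
\]
Near \(v(x)\), Assumption~\ref{ass:var_continuity} gives \(\mathbb P(L(x)=\eta)=0\). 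Together with \(F(v(x),x)=\alpha\) (established in the proof of Theorem~\ref{thm:var_continuity}), this shows \(\eta=v(x)\) is a minimizer. Hence
\[
\operatorname{CVaR}_\alpha(L(x))=\phi(x,v(x)).
\]
Strict monotonicity of \(F(\cdot,x)\) near \(v(x)\) makes this minimizer unique.

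The second step is to differentiate \(\phi\) in \(x\) for fixed \(\eta\). The map \(a\mapsto(a-\eta)_+\) is 1-Lipschitz. By the mean value theorem and Assumption~\ref{ass:cvar}(2), the difference quotients of \((L_\omega(x)-\eta)_+\) are dominated by \(M_\omega\) along segments in \(O\), so \(\phi(\cdot,\eta)\) is locally Lipschitz uniformly in \(\eta\). Pathwise, \((L_\omega(\cdot)-\eta)_+\) is differentiable at \(x\) with gradient \(\nabla L_\omega(x)\mathbf 1_{\{L_\omega(x)>\eta\}}\) whenever \(L_\omega(x)\neq\eta\). When \(\eta\) lies in the density neighborhood of \(v(x)\), that event has probability one. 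Dominated convergence then gives
\[
\nabla_x\phi(x,\eta)=\frac{1}{1-\alpha}\mathbb E\big[\nabla L_\omega(x)\mathbf 1_{\{L_\omega(x)>\eta\}}\big].
\]

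The third step, which I expect to be the main obstacle, is to pass from \(\phi(\cdot,\eta)\) to the infimum. Write \(g(x)=\phi(x,v(x))\) and fix a direction \(h\). For an upper bound,
\[
g(x+th)-g(x)\le\phi(x+th,v(x))-\phi(x,v(x)).
\]
For a lower bound,
\[
g(x+th)-g(x)\ge\phi(x+th,v(x+th))-\phi(x,v(x+th)).
\]
I would write the lower-bound difference as \(t\int_0^1\nabla_x\phi(x+sth,v(x+th))\cdot h\,ds\), or directly as an expectation of pathwise mean-value integrals. One could then let \(t\to0\) using continuity of \(v\) (Theorem~\ref{thm:var_continuity}). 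The delicate point is that the indicator \(\mathbf 1_{\{L_\omega(x+sth)>v(x+th)\}}\) converges to \(\mathbf 1_{\{L_\omega(x)>v(x)\}}\) only off the event \(\{L_\omega(x)=v(x)\}\). That event is null by the density assumption, so dominated convergence with bound \(M_\omega\|h\|\) applies. Both bounds yield the same directional derivative, linear in \(h\), which gives (Gâteaux) differentiability with the stated formula.

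The final step is continuity of the gradient. For \(x_n\to x\), we have \(v(x_n)\to v(x)\) by Theorem~\ref{thm:var_continuity}, and \(\nabla L_\omega(x_n)\to\nabla L_\omega(x)\) a.s. by Assumption~\ref{ass:cvar}(1). The indicators \(\mathbf 1_{\{L_\omega(x_n)>v(x_n)\}}\) converge a.s. off the null set \(\{L_\omega(x)=v(x)\}\). Dominated convergence with dominator \(M\) then gives convergence of the gradients. A Gâteaux differentiable map with continuous Gâteaux derivative is \(C^1\), which completes the argument.
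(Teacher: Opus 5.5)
Your proposal is correct and follows essentially the same route as the paper: the Rockafellar--Uryasev representation with \(v(x)\) as minimizer, the envelope sandwich \(\phi(x+\mathbf u,v(x+\mathbf u))-\phi(x,v(x+\mathbf u))\le g(x+\mathbf u)-g(x)\le\phi(x+\mathbf u,v(x))-\phi(x,v(x))\), continuity of \(v\) from Theorem~\ref{thm:var_continuity}, and dominated convergence with dominator \(M\) off the null event \(\{L_\omega(x)=v(x)\}\), used both for the derivative and for its continuity. The differences are only in packaging: you verify the minimizer via one-sided derivatives where the paper cites Rockafellar--Uryasev, and you prove G\^ateaux differentiability before upgrading to \(C^1\), whereas the paper first establishes joint continuity of \((x,\eta)\mapsto\nabla_x\phi(x,\eta)\) on \(O\times\mathbb R\) and obtains the Fr\'echet expansion directly.
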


Theorem~\ref{thm:cvar_differentiability} is the central result. It shows
that, under mild local regularity conditions, the CVaR of a
decision-dependent loss is a continuously differentiable function of the
decision variable. 

\begin{proof}
For
a.e. \(\omega\in\Omega\), the map
\(
x\mapsto L_{\omega}  (x)
\)
is continuously differentiable on \(\mathcal O\). 
For \((x,\eta)\in O\times\mathbb R\), define
\[
\phi (x,\eta)
:=
\eta+\frac{1}{1-\alpha}\,
\mathbb E_\omega\!\left[
\bigl(L_{\omega}  (x)-\eta\bigr)_+
\right]
\]
where \((x)_+ := \max\{x,0\}\) denotes the positive part of \(x\).
By Assumption~\ref{ass:cvar}(1), the expectation above is finite for every
\((x,\eta)\in O\times\mathbb R\). Therefore, by~\cite{rockafellar2000optimization} and~\cite[Eq. (32)]{li2022risk},
\begin{equation}\label{eq:ru_rep_revised}
\mathrm{CVaR}_{\alpha}\!\bigl(L (x)\bigr)
=
\min_{\eta\in\mathbb R}\phi (x,\eta),
\qquad \forall x\in \mathcal O
\end{equation}
By the Rockafellar--Uryasev representation
\cite[Theorem~1 and its proof]{rockafellar2000optimization},
\(v(x)\) is a minimizer of
\(\eta\mapsto \phi (x,\eta)\).

We proceed as follows.

\paragraph{\textbf{1. Differentiability of \texorpdfstring{\(\phi \)}{phi  } with respect to \texorpdfstring{\(x\)}{h}}.}
\label{sec:diff_phi}

We now show that, for any fixed $\eta\in\mathbb R$, function $x\to \phi (x, \eta)$ is differentiable.

Fix \(\eta\in\mathbb R\). For every fixed \(x\in \mathcal O\), consider the map
\begin{equation}
    x\mapsto
\bigl(L_{\omega}  (x)-\eta\bigr)_+ 
\label{eq:map}
\end{equation}
There are three cases.

If
\[
L_{\omega}  (x)-\eta<0,
\]
then, by continuity of the function
\(x\mapsto L_{\omega}  (x)\), there exists a
neighborhood of \(x\) on which
\(
L_{\omega}  (x)-\eta<0.
\)
Hence
\(
\bigl(L_{\omega}  (x)-\eta\bigr)_+=0
\)
in this neighborhood. Therefore, the map~\eqref{eq:map} is locally constant and differentiable
with derivative zero.

If
\[
L_{\omega}  (x)-\eta>0,
\]
then, by continuity of the function
\(x\mapsto L_{\omega}  (x)\), there exists a
neighborhood of \(x\) on which
\(
L_{\omega}  (x)-\eta>0.
\)
Hence
\(
\bigl(L_{\omega}  (x)-\eta\bigr)_+
=
L_{\omega}  (x)-\eta
\)
in this neighborhood. Since the function
\(x\mapsto L_{\omega}  (x)\) is differentiable, the map~\eqref{eq:map} is
differentiable at \(x\).

The only remaining case is
\[
L_{\omega}  (x)-\eta=0
\]
At such points, the map~\eqref{eq:map}
may fail to be differentiable at the fixed point \(x\). However, by
Assumption~\ref{ass:var_continuity}(1), the CDF of
\(L (x)\) is continuous. Hence
\[
\mathbb P\bigl(L (x)=\eta\bigr)=0
\]
Therefore, this case occurs with probability zero and can be neglected.

Therefore, for every fixed \(x\in O\), the map
\[
x\mapsto
\bigl(L_{\omega}  (x)-\eta\bigr)_+
\]
is differentiable at \(x\) for a.e.
\(\omega\in\Omega\), with
\begin{equation}
\nabla_{x}
\bigl(L_{\omega}  (x)-\eta\bigr)_+
=
\nabla_{x}L_{\omega}  (x)\,
\mathbf 1_{\{L_{\omega}  (x)>\eta\}} .
\label{eq:nabla_l}
\end{equation}
Indeed, using \eqref{eq:nabla_l} and the fact that the indicator function
takes values only in \(\{0,1\}\), we have
\[
\begin{aligned}
\left\|
\nabla_{x}\bigl(L_{\omega}  (x)-\eta\bigr)_+
\right\|
&=
\left\|
\nabla_{x}L_{\omega}  (x)\,
\mathbf 1_{\{L_{\omega}  (x)>\eta\}}
\right\| \\
&\le
\left\|
\nabla_{x}L_{\omega}  (x)
\right\| \\
&\le
\sup_{x\in O}
\left\|
\nabla_{x}L_{\omega}  (x)
\right\| \\
&\le M_{\omega}  ,
\qquad \text{for a.e. \(\omega\in\Omega\)}
\end{aligned}
\]
where the last inequality follows
from Assumption~\ref{ass:cvar}(2).

Since \(M \) is an integrable random function, differentiation may be passed under the
expectation, yielding
\begin{equation}\label{eq:grad_phi_revised}
\nabla_{x}\phi (x,\eta)
=
\frac{1}{1-\alpha}\,
\mathbb E_\omega\!\left[
\nabla_{x}L_{\omega}  (x)\,
\mathbf 1_{\{L_{\omega}  (x)>\eta\}}
\right]
\end{equation}
see~\cite[Theorem (Differentiation under the integral sign)]{chua2016probabilitymeasure}.

\vspace{0.5cm}

\paragraph{\textbf{2. Continuity of the gradient}.}
\label{sec:continuity-of-the-gradient}

We now show that
\(
(x,\eta)\mapsto \nabla_{x}\phi (x,\eta)
\)
is continuous on \(\mathcal O\times\mathbb R\). 

Let
\((x_n,\eta_n)\to(x,\eta)\). For a.e.
\(\omega\in\Omega\), continuity of
\(x\mapsto \nabla_{x}L_{\omega}  (x)\) gives
\(
\nabla_{x}L_{\omega}  (x_n)
\to
\nabla_{x}L_{\omega}  (x),
\)
and continuity of \(x\mapsto L_{\omega}  (x)\) gives
\(
L_{\omega}  (x_n)\to L_{\omega}  (x).
\)
Hence
\(
\mathbf 1_{\{L_{\omega}  (x_n)>\eta_n\}}
\to
\mathbf 1_{\{L_{\omega}  (x)>\eta\}}
\)
for a.e. \(\omega\), except possibly on
\(\{L_{\omega}  (x)=\eta\}\), which has probability zero. Since
\[
\left\|
\nabla_{x}L_{\omega}  (x_n)\,
\mathbf 1_{\{L_{\omega}  (x_n)>\eta_n\}}
\right\|
\le M_{\omega}  ,
\]
Then \cite[Theorem 16.4]{billingsley2012probability} implies
\[
\nabla_{x}\phi (x_n,\eta_n)
\to
\nabla_{x}\phi (x,\eta).
\]
Thus \((x,\eta)\mapsto \nabla_{x}\phi (x,\eta)\)
is continuous on \(O\times\mathbb R\).

\paragraph{\textbf{3. Continuity of the Value at Risk}}\label{sec:cont_var_in_proof}

Define
\[
v(x):=
\mathrm{VaR}_{\alpha}\!\bigl(L (x)\bigr),
\qquad x\in O
\]
By Theorem~\ref{thm:var_continuity}, the mapping
\(
x \mapsto 
v(x)
\)
is continuous on \(O\).

\paragraph{\textbf{4. Differentiability of CVaR}}\label{sec:diff_cvar}

Define
\begin{equation}
g (x)
:=
\mathrm{CVaR}_{\alpha}\!\bigl(L (x)\bigr)
=
\phi \bigl(x,v(x)\bigr)
\label{eq:def_g_cvar}
\end{equation}
We show that \(g \) is differentiable on \(O\).
Fix \(x\in \mathcal O\), and let \(\mathbf u\in\mathbb R^d\).
To prove differentiability of \(g \) at \(x\), we
show that there exists a linear map \(J:\mathbb R^d\to\mathbb R\) such that
\begin{equation}
\lim_{\mathbf u\to 0}
\frac{
\left|
g (x+\mathbf u)
-
g (x)
-
J(\mathbf u)
\right|
}{
\|\mathbf u\|
}
=0
\label{eq:differentiability_def}
\end{equation}
We will show that this holds with
\begin{equation}
J(\mathbf u)
=
\left\langle
\nabla_{x}\phi 
\bigl(x,\eta\bigr),
\mathbf u
\right\rangle,
\quad \text{with } \eta = v(x)
\label{eq:def_J}
\end{equation}
Equivalent to~\eqref{eq:differentiability_def}, we will prove that
\begin{equation}
g (x+\mathbf u)-g (x)
=
\left\langle
\nabla_{x}\phi 
\bigl(x,v(x)\bigr),
\mathbf u
\right\rangle
+
o(\|\mathbf u\|)
\label{eq:target_expansion}
\end{equation}
Since
\begin{equation}
v(x+\mathbf u)
\in
\arg\min_{\eta\in\mathbb R}
\phi (x+\mathbf u,\eta)
\label{eq:v_min_hu}
\end{equation}
we have
\begin{equation}
\phi \bigl(x+\mathbf u,v(x+\mathbf u)\bigr)
\le
\phi \bigl(x+\mathbf u,v(x)\bigr)
\label{eq:min_ineq_hu}
\end{equation}
and since
\begin{equation}
v(x)
\in
\arg\min_{\eta\in\mathbb R}
\phi (x,\eta),
\label{eq:v_min_h}
\end{equation}
we have
\begin{equation}
\phi \bigl(x,v(x)\bigr)
\le
\phi \bigl(x,v(x+\mathbf u)\bigr)
\label{eq:min_ineq_h}
\end{equation}
Therefore, by \eqref{eq:min_ineq_hu} and \eqref{eq:min_ineq_h},
\begin{align}
&\phi \bigl(x+\mathbf u,v(x+\mathbf u)\bigr)
-
\phi \bigl(x,v(x+\mathbf u)\bigr) \quad\le
\underbrace{
\phi \bigl(x+\mathbf u,v(x+\mathbf u)\bigr)
}_{=\,g (x+\mathbf u)}
-
\underbrace{
\phi \bigl(x,v(x)\bigr)
}_{=\,g (x)} \le
\phi \bigl(x+\mathbf u,v(x)\bigr)
-
\phi \bigl(x,v(x)\bigr)
\label{eq:sandwich_before_subtract}
\end{align}
Subtracting from the three terms in~\eqref{eq:sandwich_before_subtract}
\(
\left\langle
\nabla_{x}\phi \bigl(x,v(x)\bigr),
\mathbf u
\right\rangle
\)
\begin{equation}
A(\mathbf u)
\le
g (x+\mathbf u)-g (x)
-
\left\langle
\nabla_{x}\phi \bigl(x,v(x)\bigr),
\mathbf u
\right\rangle
\le
B(\mathbf u)
\label{eq:sandwich_AB}
\end{equation}
where
\begin{align}
A(\mathbf u)
:=
\phi \bigl(x+\mathbf u,v(x+\mathbf u)\bigr)
-
\phi \bigl(x,v(x+\mathbf u)\bigr)-
\left\langle
\nabla_{x}\phi \bigl(x,v(x)\bigr),
\mathbf u
\right\rangle,
\label{eq:def_A}
\\[1ex]
B(\mathbf u)
:=
\phi \bigl(x+\mathbf u,v(x)\bigr)
-
\phi \bigl(x,v(x)\bigr) \quad-
\left\langle
\nabla_{x}\phi \bigl(x,v(x)\bigr),
\mathbf u
\right\rangle
\label{eq:def_B}
\end{align}

We claim that
\begin{equation}
A(\mathbf u)=o(\|\mathbf u\|),
\qquad
B(\mathbf u)=o(\|\mathbf u\|),
\qquad
\text{as }\mathbf u\to 0.
\label{eq:AB_claim}
\end{equation}
Indeed, by Step~3,
\begin{equation}
v(x+\mathbf u)\to v(x)
\qquad\text{as }\mathbf u\to 0.
\label{eq:v_continuity}
\end{equation}
Moreover, the map
\begin{equation}
(x,\eta)\mapsto \nabla_{x}\phi (x,\eta)
\label{eq:grad_map}
\end{equation}
is continuous on \(\mathcal O\times\mathbb R\). 

Fix $\eta\in\mathbb R$. Since $\phi (\cdot,\eta)$ is continuously differentiable on $O$, and since the segment 
$\{x+s\mathbf u:\; s\in[0,1]\}$ is contained in $O$ for $\mathbf u$ sufficiently small, we may use~\cite[Theorem~6.7 and its proof]{strang2017calculus} to the function $x\mapsto \phi (x,\eta)$ along the path $\mathbf r(s)=x+s\mathbf u$. This yields
\begin{align}
\phi (x+\mathbf u,\eta)
-
\phi (x,\eta)
&=
\int_0^1
\left\langle
\nabla_{x}\phi (x+s\mathbf u,\eta),
\mathbf u
\right\rangle ds
\label{eq:fundamental_path}
\end{align}
\[
\iff
\]
\begin{align}
&\phi \bigl(x+\mathbf u,\eta\bigr)
-
\phi \bigl(x,\eta\bigr)
-
\left\langle
\nabla_{x}\phi \bigl(x,v(x)\bigr),
\mathbf u
\right\rangle =
\int_0^1
\left\langle
\nabla_{x}\phi (x+s\mathbf u,\eta)
-
\nabla_{x}\phi \bigl(x,v(x)\bigr),
\mathbf u
\right\rangle
\,ds
\label{eq:fundamental_path_subtracted}
\end{align}
In particular, for $\eta=v(x+\mathbf u)$, ~\eqref{eq:def_A} becomes
\begin{align}
A(\mathbf u)
&=
\int_0^1
\Big\langle
\begin{aligned}
&\nabla_{x}\phi 
\bigl(
x+s\mathbf u,
v(x+\mathbf u)
\bigr)\\
&-
\nabla_{x}\phi 
\bigl(
x,
v(x)
\bigr),
\mathbf u
\end{aligned}
\Big\rangle ds 
\label{eq:A_integral}
\end{align}
Similarly for $\eta=v(x)$, \eqref{eq:def_B} becomes
\begin{align}
B(\mathbf u)
&=
\int_0^1
\Big\langle
\begin{aligned}
&\nabla_{x}\phi 
\bigl(
x+s\mathbf u,
v(x)
\bigr)\\
&-
\nabla_{x}\phi 
\bigl(
x,
v(x)
\bigr),
\mathbf u
\end{aligned}
\Big\rangle ds 
\label{eq:B_integral}
\end{align}
We first prove that $A(\mathbf u)=o(\|\mathbf u\|)$. From
\eqref{eq:A_integral}, using the triangle inequality for integrals and~\cite[6.15]{axler2024linear}, we get
\begin{align}
|A(\mathbf u)|
&\le
\int_0^1
\Big|
\Big\langle
\begin{aligned}
&\nabla_{x}\phi 
\bigl(
x+s\mathbf u,
v(x+\mathbf u)
\bigr)\\
&-
\nabla_{x}\phi 
\bigl(
x,
v(x)
\bigr),
\mathbf u
\end{aligned}
\Big\rangle
\Big|\,ds
\le
\|\mathbf u\|
\int_0^1
\Big\|
\begin{aligned}
&\nabla_{x}\phi 
\bigl(
x+s\mathbf u,
v(x+\mathbf u)
\bigr)\\
&-
\nabla_{x}\phi 
\bigl(
x,
v(x)
\bigr)
\end{aligned}
\Big\|\,ds
\label{eq:A_bound}
\end{align}
Dividing \eqref{eq:A_bound} by $\|\mathbf u\|$ gives
\begin{align}
\frac{|A(\mathbf u)|}{\|\mathbf u\|}
&\le
\int_0^1
\Big\|
\begin{aligned}
&\nabla_{x}\phi 
\bigl(
x+s\mathbf u,
v(x+\mathbf u)
\bigr)\\
&-
\nabla_{x}\phi 
\bigl(
x,
v(x)
\bigr)
\end{aligned}
\Big\|\,ds
\label{eq:A_ratio}
\end{align}
Since \(\mathbf u\to 0\), we have
\begin{equation}
\sup_{s\in[0,1]}
\bigl\|
(x+s\mathbf u)-x
\bigr\|
=
\sup_{s\in[0,1]} s\,\|\mathbf u\|
\le
\|\mathbf u\|
\longrightarrow 0.
\label{eq:uniform_h_convergence}
\end{equation}
Hence,
\begin{equation}
x+s\mathbf u \to x
\quad \text{uniformly in } s\in[0,1].
\label{eq:h_uniform_statement}
\end{equation}

Moreover, since \(v\) is continuous at
\(x\),
\begin{equation}
v(x+\mathbf u)\to v(x).
\label{eq:v_uniform_statement}
\end{equation}
Therefore,
\begin{equation}
\bigl(x+s\mathbf u,\,
v(x+\mathbf u)\bigr)
\to
\bigl(x,\,v(x)\bigr)
\label{eq:pair_uniform_convergence}
\end{equation}
uniformly in \(s\in[0,1]\). Hence, by continuity of
\((x,\eta)\mapsto
\nabla_{x}\phi (x,\eta)\), the integrand in
\eqref{eq:A_ratio} converges to \(0\) uniformly in \(s\in[0,1]\).

Hence
\(
\frac{|A(\mathbf u)|}{\|\mathbf u\|}\to0,
\)
and so
\begin{equation}
A(\mathbf u)=o(\|\mathbf u\|).
\label{eq:A_little_o}
\end{equation}

The proof for $B(\mathbf u)$ is similar. Indeed, applying the same argument to
\eqref{eq:B_integral}, and using the continuity of
\((x,\eta)\mapsto
\nabla_{x}\phi (x,\eta)\), we obtain
\(
\frac{|B(\mathbf u)|}{\|\mathbf u\|}\to0.
\label{eq:B_ratio_to_zero}
\)
Therefore,
\begin{equation}
B(\mathbf u)=o(\|\mathbf u\|).
\label{eq:B_little_o}
\end{equation}

Since $A(\mathbf u)=o(\|\mathbf u\|)$~\eqref{eq:A_little_o} and
$B(\mathbf u)=o(\|\mathbf u\|)$~\eqref{eq:B_little_o}, and since
\eqref{eq:sandwich_AB} holds, it follows that
\begin{align}
g (x+\mathbf u)-g (x)
&=
\left\langle
\nabla_{x}\phi 
\bigl(x,v(x)\bigr),
\mathbf u
\right\rangle
+
o(\|\mathbf u\|)
\label{eq:final_expansion}
\end{align}
as \(\mathbf u\to 0\). Hence \(g \) is differentiable at
\(x\), with
\begin{equation}
\nabla_{x}g (x)
=
\nabla_{x}\phi \bigl(x,v(x)\bigr).
\label{eq:gradient_g}
\end{equation}

\paragraph{\textbf{5. Continuity of the gradient}.}
\label{sec:cont_gradient}
We now prove that \(x\mapsto \nabla_x g(x)\) is continuous where
\[
\nabla_x g(x)=\nabla_x\phi(x,v(x)),
\]
Having shown that \(g \) is differentiable on \(O\) in Step~\ref{sec:diff_cvar}, it remains to prove
that its gradient is continuous. Let \(x_n\to x\) in
\(O\). By the continuity of \(v\) on \(O\) (Step~\ref{sec:cont_var_in_proof}),
\[
v(x_n)\to v(x).
\]
Hence
\(
\bigl(x_n,v(x_n)\bigr)
\to
\bigl(x,v(x)\bigr)
\quad\text{in }O\times\mathbb R .
\)
Since Step 2 shows that
\(
(x,\eta)\mapsto
\nabla_{x}\phi (x,\eta)
\)
is continuous on \(O\times\mathbb R\), we obtain
\[
\nabla_{x}\phi 
\bigl(x_n,v(x_n)\bigr)
\to
\nabla_{x}\phi 
\bigl(x,v(x)\bigr).
\]
\[
\Rightarrow^\eqref{eq:def_g_cvar}
\nabla_{x}g (x_n)
\to
\nabla_{x}g (x).
\]
\[
\Rightarrow^\eqref{eq:def_g_cvar}
x\mapsto
\mathrm{CVaR}_{\alpha}\!\bigl(L (x)\bigr)
\text{ is cont. differentiable in }
\mathcal O.
\]
\end{proof}
\printcredits





\end{document}